\newtheorem{theorem}{Theorem}
\newtheorem{definition}[theorem]{Definition}
\newtheorem{lemma}[theorem]{Lemma}
\newtheorem{observation}[theorem]{Observation}
\newtheorem{problem}[theorem]{Problem}
\title{
  Faster STR-EC-LCS Computation
}
\author{
  Kohei~Yamada$^1$\quad
  Yuto~Nakashima$^1$\quad
  Shunsuke~Inenaga$^{1, 2}$\quad\\
  Hideo~Bannai$^1$\quad
  Masayuki~Takeda$^1$\\
  {$^1$ Department of Informatics, Kyushu University, Fukuoka, Japan}\\
  {$^2$ PRESTO, Japan Science and Technology Agency, Kawaguchi, Japan}\\
  {\texttt{\{kohei.yamada,yuto.nakashima,inenaga}}\\
  {\texttt{bannai,takeda\}@inf.kyushu-u.ac.jp}}
}
\newcommand{\streclcs}{STR-EC-LCS}
\newcommand{\overlap}[1]{\mathit{\sigma(#1)}}
\newcommand{\prop}{\mathsf{Property}}
\newcommand{\nextchar}{\mathsf{next}_{\mathit{B}}}
\newcommand{\newoverlap}{\mathsf{next}_{\mathit{\sigma}}}
\newcommand{\kmp}{\mathit{kmp}}
\begin{document}
\maketitle

\begin{abstract}
  The longest common subsequence (LCS) problem is a central problem in stringology 
  that finds the longest common subsequence of given two strings $A$ and $B$.
  More recently, a set of four constrained LCS problems 
  (called generalized constrained LCS problem) were proposed by Chen and Chao [J. Comb. Optim, 2011].
  In this paper, we consider the substring-excluding constrained LCS (STR-EC-LCS) problem.
  A string $Z$ is said to be {\em an STR-EC-LCS of two given strings} $A$ and $B$ {\em excluding} $P$ if, 
  $Z$ is one of the longest common subsequences of $A$ and $B$ that does not contain $P$ as a substring.
  Wang et al. proposed a dynamic programming solution 
  which computes an STR-EC-LCS in $O(mnr)$ time and space 
  where $m = |A|, n = |B|, r = |P|$ [Inf. Process. Lett., 2013].
  In this paper, we show a new solution for the STR-EC-LCS problem.
  Our algorithm computes an STR-EC-LCS in $O(n|\Sigma| + (L+1)(m-L+1)r)$ time 
  where $|\Sigma| \leq \min\{m, n\}$ denotes the set of distinct characters
  occurring in both $A$ and $B$,
  and $L$ is the length of the STR-EC-LCS.
  This algorithm is faster than the $O(mnr)$-time algorithm for short/long STR-EC-LCS
  (namely, $L \in O(1)$ or $m-L \in O(1)$),
  and is at least as efficient as the $O(mnr)$-time algorithm for all cases.
\end{abstract}

\section{Introduction}

The \emph{longest common subsequence} (\emph{LCS}) problem
of finding an LCS of given two strings,
is a classical and important problem in Theoretical Computer Science.
Given two strings $A$ and $B$ of respective lengths $m$ and $n$,
it is well known that the LCS of $A$ and $B$ can be computed
by a standard dynamic programming technique~\cite{Wagner_1974_LCS}.
Since LCS is one of the most fundamental similarity measures for string comparison,
there are a number of studies on faster computation of LCS and its
applications~\cite{BUNKE199593,LCS_RLE_2014,LCSapp_bioinf_2011,LCSapp_patrecog_2013}.
It is also known that there is a conditional lower bound
which states that the LCS of two strings of length $n$ each
cannot be computed in $O(n^{2-\epsilon})$ time for any constant $\epsilon > 0$,
unless the famous popular Strong Exponential Time Hypothesis (SETH) fails~\cite{AbboudBW15}.
Thus, it is highly likely that one needs to use almost quadratic time
for computing LCS in the worst case.
Still, it is possible to design algorithms for computing LCS
whose running time depends on other parameters.
One of such algorithms was proposed by Nakatsu et al.~\cite{DBLP:journals/acta/NakatsuKY82},
which finds an LCS of given two strings $A$ and $B$ in $O(n(m-l))$ time and space, 
where $l$ is the length of the LCS of the two given strings.
This algorithm is efficient when $l$ is large, namely, $A$ and $B$ are very similar.

Of a variety of extensions to LCS that have been extensively studied,
this paper focuses on a class of problems called the \emph{constrained LCS} problems,
first considered by Tsai~\cite{CLCS_Tsai_2003}.
We are given strings $A, B$ and constraint string $P$ of length $r$, 
and the CLCS problem is to find a longest subsequence common to $A$ and $B$,
such that the subsequence has $P$ as a subsequence.
He also presented a dynamic programming algorithm which solves
the problem in $O(m^2n^2r)$ time and space.
The motivation for introducing constraints is to reflect some
a-priori knowledge (e.g., biological knowledge) to the solutions.
Later, the \emph{generalized constrained LCS} (\emph{GC-LCS})
problems were introduced by Chen et al.~\cite{DBLP:journals/jco/ChenC11}.
GC-LCS consists of four variants of the constrained LCS problem,
which are respectively called \emph{SEQ-IC-LCS}, \emph{SEQ-EC-LCS}, \emph{STR-IC-LCS}, and
\emph{STR-EC-LCS}.
For given strings $A, B$ and $P$, 
the problem is to find a longest subsequence common
to $A$ and $B$ such that 
the subsequence includes/excludes/includes/excludes $P$ as a subsequence/subsequence/substring/substring, respectively for SEQ-IC-LCS/SEQ-EC-LCS/STR-IC-LCS/STR-EC-LCS.
We remark that CLCS is the same as SEQ-IC-LCS.
The best known results for these problems were proposed in~\cite{SEQICLCS_2004,SEQECLCS_Chen_2011,STRICLCS_DEOROWICZ_2012,STRECLCS_Wang_2013}.

The quadratic bound for STR-IC-LCS seems to be very difficult to improve,
since STR-IC-LCS is a special case of LCS (recall the afore-mentioned conditional lower bound
for LCS).
Since the other three variants require cubic time,
it is important to discover more efficient solutions for these problems.
There exist faster dynamic programming solutions for SEQ-IC-LCS and STR-IC-LCS
which are based on run-length encodings~\cite{SEQ-IC-LCS-RLE_2014,STR-IC-LCS-RLE_Kuboi_2017}.
However, no faster solutions to STR-EC-LCS than the one
with $O(mnr)$ running time~\cite{STRECLCS_Wang_2013} are known to date.

In this paper, we revisit the STR-EC-LCS problem.
More formally, 
we say that a string $Z$ is {\em an STR-EC-LCS of two given strings} $A$ and $B$ {\em excluding} P if, 
$Z$ is one of the longest common subsequences of $A$ and $B$ that does not contain $P$ as a substring.
We show a new dynamic programming solution for the STR-EC-LCS problem 
which runs in $O(n|\Sigma| + (L+1)(m-L+1)r)$ time and space, 
where $\Sigma$ is the set of distinct characters occurring in both $A$ and $B$,
and $L$ is the length of the solution.
Note that $|\Sigma| \leq \min\{m, n\}$ always holds.
Our algorithm is built on Nakatsu et al.s' method for the (original) LCS problem~\cite{DBLP:journals/acta/NakatsuKY82}.
Assume w.l.o.g. that $m \leq n$.
When the length of STR-EC-LCS is quite short or long (namely, $L \in O(1)$ or $m-L \in O(1)$),
our algorithm runs only in $O(n|\Sigma|+mr) = O((n+r)m) = O(nm)$ time and space,
since $r \leq n$.
Even in the worst case where $L \in \Theta(m)$ and $m-L \in \Theta(m)$,
which happens when $L = cm$ for any constant $0 < c < 1$,
our algorithm is still as efficient as $O(mnr)$ since $|\Sigma| \leq \min\{m, n\}$.

This paper is organized as follows;
we will give notations which we use in this paper in Section~\ref{sec:preliminaries},
we will propose our dynamic programming solution for the STR-EC-LCS problem in Section~\ref{sec:dp-part},
finally, we will explain our algorithm for the STR-EC-LCS in Section~\ref{sec:algorithm}.

\section{Preliminaries}\label{sec:preliminaries}

\subsection{Strings}
Let $\Sigma$ be an integer {\em alphabet}.
An element of $\Sigma^*$ is called a {\em string}.
The length of a string $w$ is denoted by $|w|$.
The empty string $\varepsilon$ is a string of length 0.
For a string $w = xyz$, $x$, $y$ and $z$ are called
a \emph{prefix}, \emph{substring}, and \emph{suffix} of $w$, respectively.
The $i$-th character of a string $w$ is denoted by $w[i]$, where $1 \leq i \leq |w|$.
For a string $w$ and two integers $1 \leq i \leq j \leq |w|$,
let $w[i..j]$ denote the substring of $w$ that begins at position $i$ and ends at
position $j$. For convenience, let $w[i..j] = \varepsilon$ when $i > j$.

A string $Z$ is a {\em subsequence} of $A$ 
if $Z$ can be obtained from $A$ by removing zero or more characters.
In this paper, we consider common subsequences of two strings $A$ and $B$
of respective lengths $m$ and $n$.
For this sake, we can perform a standard preprocessing on $A$ and $B$
that removes every character that occurs only in either $A$ or $B$,
because such a character is never contained in any common subsequences of $A$ and $B$.
Assuming $n \geq m$,
this preprocessing can be done in $O(n \log n)$ time with $O(n)$ space
for general ordered alphabets,
and in $O(n)$ time and space for integer alphabets of polynomial size in $n$
(c.f. \cite{InenagaH18}).
In what follows, we consider the latter case of integer alphabets,
and assume that $A$ and $B$ have been preprocessed as above.
In the sequel, let $\Sigma$ denote the set of distinct characters
that occur in both $A$ and $B$.
Note that $|\Sigma| \leq \min\{m, n\} = m$ holds.

\subsection{STR-EC-LCS}
Let $A, B$ and $P$ be strings.
A string $Z$ is said to be {\em an STR-EC-LCS of two given strings} $A$ and $B$ {\em excluding} P if, 
$Z$ is one of the longest common subsequences of $A$ and $B$ that does not contain $P$ as a substring.
For instance, $\mathtt{bcaac}$, $\mathtt{bcaba}$, $\mathtt{acaac}$, $\mathtt{acaba}$, $\mathtt{abaac}$ and $\mathtt{ababa}$ 
are STR-EC-LCS of $A = \mathtt{abcabac}$ and $B = \mathtt{acbcaacbaa}$ excluding $P = \mathtt{abc}$.
Although $\mathtt{abcaba}$ and $\mathtt{abcaac}$ are longest common subsequences of $A$ and $B$,
they are not STR-EC-LCS of the same strings (since they have $P$ as a substring).

In Section~\ref{sec:dp-part}, we revisit the STR-EC-LCS problem defined as follows.

\begin{problem}[STR-EC-LCS problem~\cite{DBLP:journals/jco/ChenC11}]
	Given strings $A, B$, and $P$, 
	compute an STR-EC-LCS (and/or its length) of given strings.
\end{problem}

In the rest of the paper, $m, n$, and $r$
respectively denote the length of $A, B$ and $P$.
It is easy to see that STR-EC-LCS problem is the same as LCS problem when $r > \min\{m, n\}$.
We assume that $r \leq m \leq n$ without loss of generality.

\section{Dynamic programming solution for the STR-EC-LCS problem}\label{sec:dp-part}

Our aim of this section is to show our dynamic programming solution for the STR-EC-LCS problem.
We first give short descriptions of 
a dynamic programming solution for the LCS problem proposed by Nakatsu et al.~\cite{DBLP:journals/acta/NakatsuKY82}, 
and a dynamic programming solution for the STR-EC-LCS problem proposed by Wang et al.~\cite{STRECLCS_Wang_2013}.

\subsection{Solution for LCS by Nakatsu et al.}\label{subsec:Nakatsu}
Nakatsu et al. proposed a dynamic programming solution 
for computing an LCS of given strings $A$ and $B$.
Here, we give a slightly modified description
of their solution in order to describe our algorithm.
For any $0 \leq i,s \leq m$, let $e(i,s)$ be the length of the shortest prefix $B[1..e(i,s)]$ of $B$ 
such that the length of the longest common subsequence of $A[1..i]$ and $B[1..e(i,s)]$ is $s$.
For convenience, $e(i,s) = n+1$ if no such 
prefix exists or if $s > i$ holds.
The values $e(i,s)$ will be computed using dynamic programming, where $i$ represents the column number, and $s$ represents the row number.
Let $\tilde{s}$ be the largest value such that
$e(i,s) < n+1$ for some $i$, i.e,
$\tilde{s}$ is the last row in the table
of $e$, which has a value smaller than $n+1$.
We can see that the length of the longest common subsequence of $A$ and $B$ is $\tilde{s}$.
We give an example in Fig.~\ref{fig:LCS-dp}.

\begin{figure}[t]
    \centerline{\includegraphics[width=0.4\linewidth]{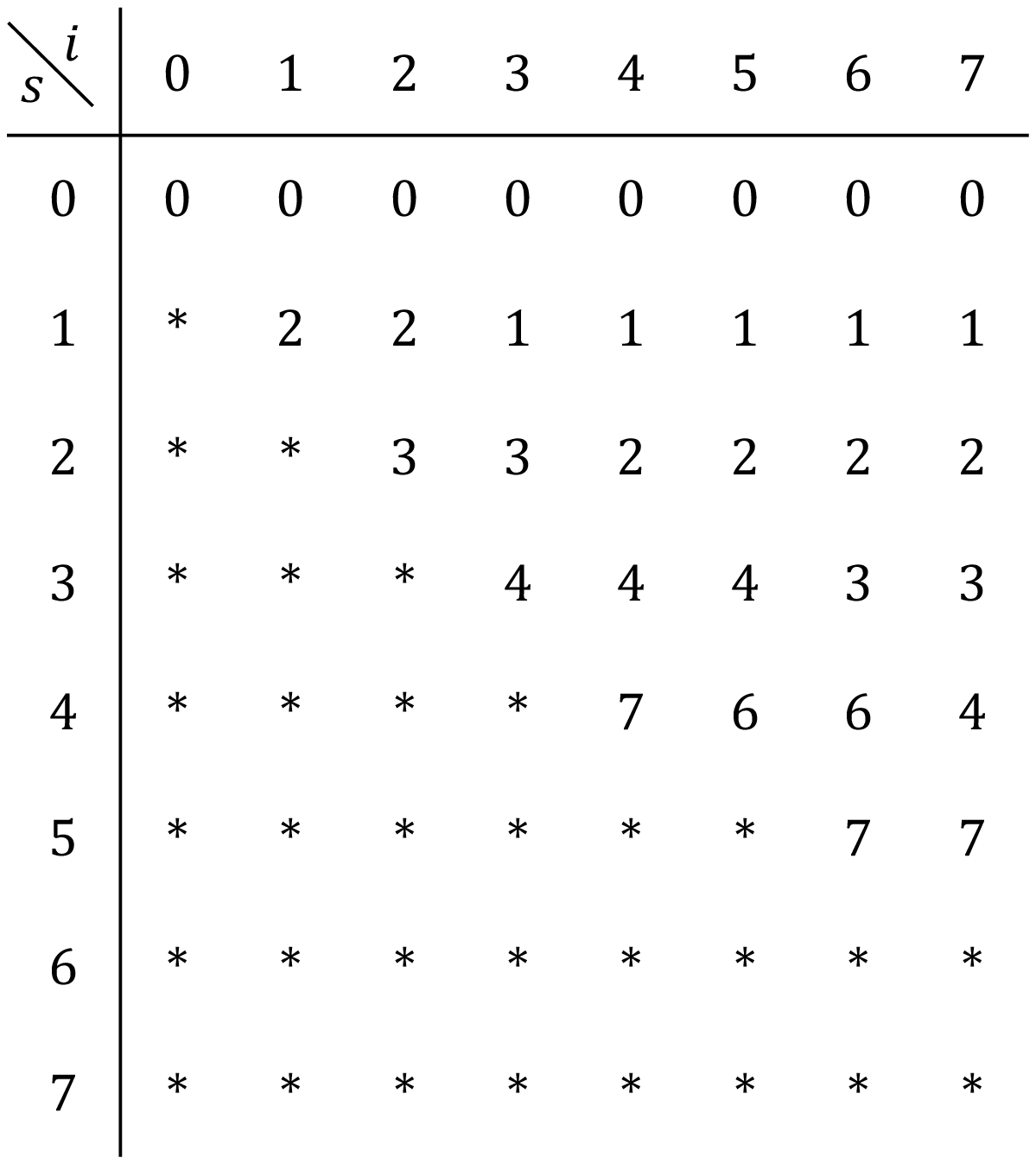}}
    \caption{
    	This is an example for table $e$ of given strings
    	$A = \mathtt{aabacab}$ and $B = \mathtt{baabbcaa}$.
    	For the sake of visibility, the value $n+1 = 9$ is replaced by asterisk ($*$).
    	The last row in the table which has a value smaller than $n+1$ is $5$;
    	that is, the length of an LCS of $A$ and $B$ is $5$.
    }\label{fig:LCS-dp}
\end{figure}

Now we explain how to compute $e$ efficiently.
Assume that $e(i-1,s)$ and $e(i-1,s-1)$ have already been computed.
We consider $e(i,s)$.
It is easy to see that $e(i,s) \leq e(i-1,s)$.
If $e(i,s) < e(i-1,s)$, an LCS of $A[1..i]$ and $B[1..e(i,s)]$ 
must use the character $A[i]$ as the last character.
Then, we can see that $e(i,s)$ is the index of the leftmost occurrence of $A[i]$ in $B[e(i-1,s-1)+1..n]$.
Let $j_{i,s}$ be the the index of the leftmost occurrence of $A[i]$ in $B[e(i-1,s-1)+1..n]$.
From these facts, the following recurrence formula holds
for $e$:
\begin{equation*}
	e(i,s) = \min\{e(i-1,s), j_{i,s}\}.
\end{equation*}

If we add more information, 
we can backtrack on the table in order to compute an LCS (as a string), and not just its length.

\subsection{Solution for STR-EC-LCS by Wang et al.}\label{subsec:Wang}
Wang et al. proposed a dynamic programming solution for STR-EC-LCS problem of given strings $A, B$ and $P$.
Here, we describe a key idea of their solution.

\begin{definition}
	For any string $S$, 
	$\overlap{S}$ is the length of the longest prefix of $P$
	which is a suffix of $S$.
\end{definition}

By using this notation, they considered a table $f$ defined as follows:
let $f(i,j,k)$ be the length of the longest common subsequence $Z$ of $A[1..i]$ and $B[1..j]$ 
such that $Z$ does not have $P$ as a substring and $\overlap{Z} = k$.
They also showed a recurrence formula for $f$.
By the definition of $f$, the length of an STR-EC-LCS is $\max\{f(m,n,t) \mid 0 \leq t < r\}$.

\subsection{Our solution for STR-EC-LCS}
Our solution is based on the idea of Section~\ref{subsec:Nakatsu}.
We maintain occurrences of a prefix of $P$ as a suffix of a common subsequence 
by using the idea of Section~\ref{subsec:Wang}.

For convenience, we introduce the following notation.
\begin{definition}
	A string $Z$ is said to satisfy $\prop(i,s,k)$
	if
	\begin{itemize}
		\item $Z$ is a subsequence of $A[1..i]$,
		\item $Z$ does not have $P$ as a substring,
		\item $|Z| = s$, and 
		\item $\overlap{Z} = k$.
	\end{itemize}
\end{definition}

Thanks to the above notation, 
we can simply introduce our table $d$ for computing STR-EC-LCS as follows.
Let $d$ be a 3-dimensional table where $d(i,s,k)$ is the length of the shortest prefix $B[1..d(i,s,k)]$ of $B$ 
such that there exists a subsequence which satisfies $\prop(i,s,k)$ 
(if no such subsequence exists, then $d(i,s,k) = n+1$ for convenience).

We can obtain the following observation about the length of an \streclcs~by the definition of $d$.

\begin{observation}
	Let $\tilde{s}$ be the largest $1 \leq s \leq m$ 
	such that $d(i,s,k) < n+1$ for some $i$ and $k$.
	$\tilde{s}$ is the length of an \streclcs~by the definition of $d$.
\end{observation}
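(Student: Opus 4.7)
The plan is to prove the observation by a standard two-direction argument, showing that the supremum $\tilde{s}$ tracked by the table $d$ coincides with the STR-EC-LCS length $L$. The proof is essentially a definitional unwinding, with no real combinatorial obstacle; the main care needed is to match the quantifiers carefully against the definition of $\prop$ and the fact that $d(i,s,k)$ records a \emph{prefix length} of $B$ (hence bounds $B$-membership) while simultaneously requiring $A[1..i]$-membership.

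First I would establish the inequality $\tilde{s} \leq L$. Suppose $d(i,s,k) < n+1$ for some $i$ and $k$. By the definition of $d$, there exists a string $Z$ satisfying $\prop(i,s,k)$ that is a subsequence of $B[1..d(i,s,k)]$. Since $Z$ is a subsequence of $A[1..i]$, it is a subsequence of $A$, and similarly a subsequence of $B$. Together with $|Z|=s$ and the fact that $Z$ does not contain $P$ as a substring, this makes $Z$ a common subsequence of $A$ and $B$ excluding $P$ of length $s$, so $s \leq L$. Taking the maximum over valid $(i,s,k)$ gives $\tilde{s}\leq L$.

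Next I would establish the reverse inequality $L \leq \tilde{s}$. Let $Z$ be any STR-EC-LCS of $A$ and $B$ excluding $P$, and set $k := \overlap{Z}$. Then $Z$ is a subsequence of $A[1..m]$, $Z$ does not contain $P$ as a substring, $|Z|=L$, and $\overlap{Z}=k$, so $Z$ satisfies $\prop(m,L,k)$. Moreover, $Z$ is a subsequence of $B=B[1..n]$, so the shortest-prefix minimization defining $d(m,L,k)$ returns some value $\leq n$; in particular $d(m,L,k) < n+1$. Hence $L$ is one of the $s$-values witnessed in the table, which yields $L \leq \tilde{s}$.

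Combining the two inequalities gives $\tilde{s} = L$, which is the claimed statement. As noted, no step is truly hard; the only place that benefits from attention is the second direction, where I want to make explicit that the shortest-prefix definition of $d$ is minimized over witnesses and therefore yields a value bounded by $n$ as soon as any common subsequence witness exists, a subtlety that is easy to gloss over if one reads the $n+1$ sentinel too literally.
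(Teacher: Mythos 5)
Your proof is correct and matches the argument the paper intends when it asserts the observation ``by the definition of $d$'' (the paper gives no explicit proof): one direction extracts a witness common subsequence from any entry $d(i,s,k)<n+1$, and the other instantiates $\prop(m,L,\overlap{Z})$ at an actual STR-EC-LCS $Z$ to show $d(m,L,\overlap{Z})\leq n$. Your remark that the sentinel $n+1$ must not be read too literally---i.e., that the minimization yields a value at most $n$ whenever any witness exists---is exactly the right point to make explicit.
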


We give an example of a table in Fig.~\ref{fig:dptables}.

\begin{figure}[t]
    \centerline{\includegraphics[width=0.8\linewidth]{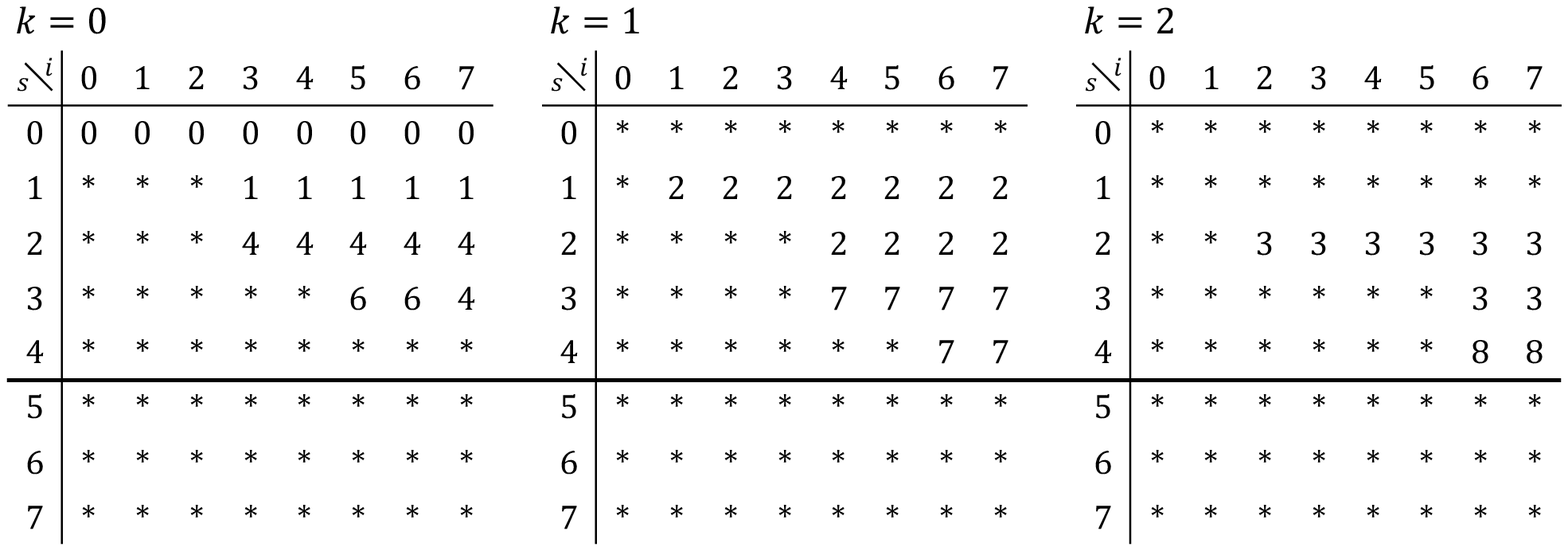}}
    \caption{
    	This is our table $d$ for given strings
    	$A = \mathtt{aabacab}, B = \mathtt{baabbcaa}$, and $P = \mathtt{aab}$.
    	In this figure, the value $n+1 = 9$ is replaced by asterisk ($*$) for convenience.
    	The lowest row which has a value smaller than $n+1 = 9$ is $\tilde{s} = 4$.
    	Thus, the length of a STR-EC-LCS is $4$.
    }\label{fig:dptables}
\end{figure}

The next lemma shows a recurrence formula for $d$.
We use this lemma for computing the length of a STR-EC-LCS.

\begin{lemma} \label{lem:recurrence}
	\begin{equation*}
    d(i, s, k) = 
    \min(\{d(i-1,s,k)\} \cup \{j_t \mid 0 \leq t < r \})
  \end{equation*}
  holds,
	where $j_t$ is the smallest position $j$ in $B[d(i-1,s-1,t)+1..n]$ such that
	$A[i] = B[j]$, and there exists a string $Z$
	which satisfies $\prop(i-1,s-1,t)$ and $\overlap{ZA[i]} = k$
	(if no such $Z$ exists for $t$, then $j_t = n+1$).
\end{lemma}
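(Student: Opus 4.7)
The plan is to prove Lemma~\ref{lem:recurrence} by double inequality. The right-hand side upper-bounds $d(i,s,k)$ because each term in the minimum corresponds to an explicit construction of a witness for $\prop(i,s,k)$, and it lower-bounds $d(i,s,k)$ because any witness realizing $d(i,s,k)$ decomposes into one of those forms. The degenerate case $d(i,s,k)=n+1$ is handled by contrapositive: if either candidate on the right-hand side is at most $n$, the upper-bound construction yields a witness, contradicting $d(i,s,k)=n+1$.

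For the upper bound, $d(i,s,k)\le d(i-1,s,k)$ is immediate since a witness of $\prop(i-1,s,k)$ realized inside $B[1..d(i-1,s,k)]$ is automatically also a subsequence of $A[1..i]$ and therefore witnesses $\prop(i,s,k)$. For $d(i,s,k)\le j_t$ with $j_t\le n$, I would take any $Z$ that both witnesses $\prop(i-1,s-1,t)$ inside $B[1..d(i-1,s-1,t)]$ and satisfies $\overlap{ZA[i]}=k$ (such a $Z$ exists by the very definition of $j_t$); the concatenation $ZA[i]$ is a subsequence of $A[1..i]$, and by matching its final character to position $j_t$ it is also a subsequence of $B[1..j_t]$, of length $s$ with overlap $k$. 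The only delicate point is that $ZA[i]$ must avoid $P$ as a substring: $Z$ does so by $\prop$, and since any $k$ with $d(i,s,k)<n+1$ forces $k<r$ (otherwise $P$ would be a suffix of the witness), no new occurrence of $P$ can straddle the appended character.

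For the lower bound, let $d := d(i,s,k) \le n$ and fix a $Z$ witnessing $\prop(i,s,k)$ that embeds in $B[1..d]$. Minimality of $d$ forces $B[d]=Z[s]$. I would split on whether $Z$ also embeds in $A[1..i-1]$. If so, $Z$ already witnesses $\prop(i-1,s,k)$, yielding $d(i-1,s,k)\le d$. Otherwise every embedding of $Z$ in $A[1..i]$ must use position $i$ for $Z[s]$, and hence $A[i]=Z[s]=B[d]$; setting $Z'=Z[1..s-1]$ and $t=\overlap{Z'}$, the prefix $Z'$ satisfies $\prop(i-1,s-1,t)$ (every substring of $Z'$ is a substring of $Z$, which avoids $P$) and embeds in $B[1..d-1]$, giving $d(i-1,s-1,t)\le d-1$. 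Because $B[d]=A[i]$ and $\overlap{Z'A[i]}=\overlap{Z}=k$, position $d$ qualifies as a candidate in the definition of $j_t$, and hence $j_t\le d$.

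The step I expect to be the trickiest is the overlap bookkeeping inside the definition of $j_t$: the definition does not merely ask that some string of length $s-1$ with overlap $t$ exist, but that one exist whose extension by $A[i]$ has overlap exactly $k$. In the lower-bound argument I therefore have to verify that the specific $Z'$ arising from $Z=Z'A[i]$ fulfills this existential clause, for which the identity $\overlap{Z'A[i]}=\overlap{Z}=k$ is the linchpin; symmetrically in the upper bound, I must remember to invoke this same existential clause when selecting the witness $Z$ to extend.
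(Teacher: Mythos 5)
Your proof is correct and follows essentially the same route as the paper's: both directions are obtained by extending a witness of $\prop(i-1,s-1,t)$ with $A[i]$ (upper bound) and truncating a minimal witness of $\prop(i,s,k)$ (lower bound). The only differences are cosmetic---you argue the lower bound directly with a case split on whether $Z$ embeds in $A[1..i-1]$, where the paper argues by contradiction splitting on whether $A[i]=B[d(i,s,k)]$---and your explicit check that $ZA[i]$ avoids $P$ (via $k<r$) supplies a detail the paper leaves implicit.
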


\begin{proof}
	We show the following inequations to prove this lemma;
	\begin{enumerate}
		\item $d(i, s, k) \leq \min(\{d(i-1,s,k)\} \cup \{j_t \mid 0 \leq t < r \})$,
		\item $d(i, s, k) \geq \min(\{d(i-1,s,k)\} \cup \{j_t \mid 0 \leq t < r \})$.
	\end{enumerate}

	We start from proving the first inequation.
	By the definition of $d$, $d(i,s,k) \leq d(i-1,s,k)$ always holds.
	If $\{j_t \mid 0 \leq t < r \} = \emptyset$, then the first inequation holds.
	We assume that $\{j_t \mid 0 \leq t < r \} \neq \emptyset$, 
	and $j_{t_1}$ is in the set ($0 \leq t_1 < r$).
	Then, there exists a subsequence $Z_1$ of $B[1..d(i-1,s-1,t_1)]$ which satisfies $\prop(i-1,s-1,t_1)$.
	Since $A[i] = B[j_{t_1}]$ and $j_{t_1} > d(i-1,s-1,t_1)$, 
	$Z_1A[i]$ is a subsequence of $B[1..j_{t_1}]$ that satisfies $\prop(i,s,k)$ and $\overlap{Z_1A[i]}=k$.
	This implies that $d(i,s,k) \leq j_{t_1}$.
	Thus, the first inequation holds.

	Suppose that the second inequation does not hold, namely,
	\begin{equation}
		\label{assumption}
		d(i, s, k) < \min(\{d(i-1,s,k)\} \cup \{j_t \mid 0 \leq t < r \})
	\end{equation}
	holds.
	If $d(i,s,k) = n+1$, then the above inequation does not hold.
	Now we consider the case $d(i,s,k) < n+1$.
	By the definition of $d$, 
	there exists a subsequence $Z_2$ of $B[1..d(i,s,k)]$ that satisfies $\prop(i,s,k)$.
	Let $Z_2' = Z_2[1..|Z_2|-1]$.
	Then, $Z_2'$ is a length $s-1$ subsequence of $A[1..i-1]$ which does not have $P$ as a substring.
	Since $Z_2'$ satisfies $\prop(i-1,s-1,\overlap{Z_2'})$,
	$d(i-1,s-1,\overlap{Z_2'}) < d(i,s,k)$ holds.
	Moreover, 
	$\overlap{Z_2'B[d(i,s,k)]} = k$ holds.
	If $A[i] = B[d(i,s,k)]$,
	then, $j_{\overlap{Z_2'}} \leq d(i,s,k)$ holds.
	This fact contradicts Inequation~(\ref{assumption}).
	Now we can assume that $A[i] \neq B[d(i,s,k)]$.
	This implies that $Z_2$ is a common subsequence of $A[1..i]$ and $B[1..d(i,s,k)-1]$, 
	or a common subsequence of $A[1..i-1]$ and $B[1..d(i,s,k)]$.
	The first case implies a contradiction by the definition of $d$.
	The second case implies that $d(i,s,k) = d(i-1,s,k)$, a contradiction.
	Thus, $d(i, s, k) \geq \min(\{d(i-1,s,k)\} \cup \{j_t \mid 0 \leq t < r \})$ holds.
\end{proof}

\section{Algorithm}\label{sec:algorithm}

In this section, we show how to compute \streclcs~by using Lemma~\ref{lem:recurrence}.
We mainly explain our algorithm to compute the length of an STR-EC-LCS
(we will explain how to compute an STR-EC-LCS at the end of this section).

To use Lemma~\ref{lem:recurrence}, 
we need $d(i-1,s,k)$ and $d(i-1,s-1,t)$ for all $0 \leq t < r$ for computing $d(i,s,k)$.
We compute our table for every diagonal line from upper left to lower right 
in left-to-right order.
In each step of our algorithm, we will fix $0 \leq i, s \leq m$
(we use $(i,s)$ to denote the step for fixed $i$ and $s$).
Then we compute $d(i,s,k)$ for any $0 \leq k < r$ in the step. 
We can see from a simple observation that $d(i,s,k) = n+1$ holds 
for any input strings if $i < s$ (since no STR-EC-LCS of length $s$ exists).
Thus, we do not compute $d(i,s,k)$ explicitly such that $i < s$.
We also describe this strategy in Fig.~\ref{fig:comp_order}.

\begin{figure}[t]
  \centerline{\includegraphics[width=0.7\linewidth]{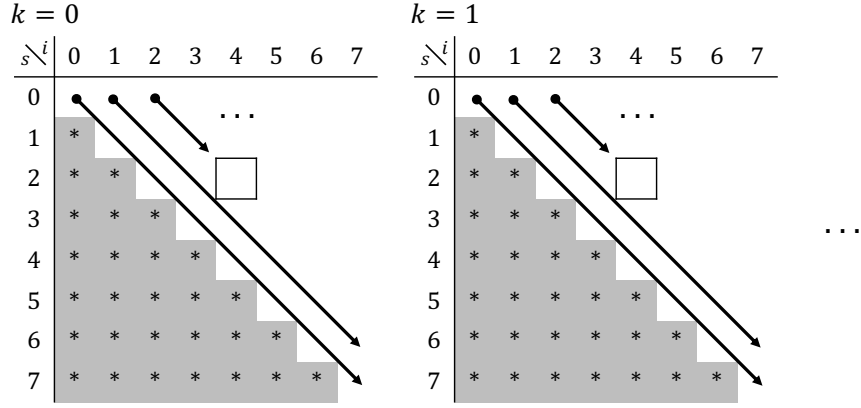}}
  \caption{
  	This figure shows the order of computation for table $d$.
  	For each table (i.e., for each $k$), 
		we do not need to compute the 
		lower left part (satisfying $s > i$).
  	We start from computing values on the leftmost arrow for each table.
  	In each step $(i,s)$, we compute $d(i,s,k)$ for all tables
  	(for instance, squared values in the figure will be computed in the same step).
  }\label{fig:comp_order}
\end{figure}

Now we consider how to compute $d(i,s,k)$ for any $0 \leq k < r$.
Let $Z(i,s,k)$ be a subsequence of $B[1..d(i-1,s-1,k)]$ satisfying $\prop(i-1,s-1,k)$.
Due to Lemma~\ref{lem:recurrence}, 
string $Z(i,s,k)A[i]$ is a witness for value $d(i,s,\overlap{Z(i,s,k)A[i]})$ 
if a (leftmost) position $j$ in $B[d(i-1,s-1,k)+1..n]$ such that $A[i] = B[j]$ exists.
For any $i,s,k$, 
let $J(i,s,k)$ denote the position $j$ described above.
Thus, we can compute $d(i,s,k)$ for any $k$ in step $(i,s)$ as follows.
\begin{enumerate}
	\item Set $d(i-1,s,k)$ as the initial value for $d(i,s,k)$ for each $k$.
	\item Compute $J(i,s,k)$ and $\overlap{Z(i,s,k)A[i]}$ for each $k$.
	\item If $J(i,s,k) < d(i,s,\overlap{Z(i,s,k)A[i]})$, then update $d(i,s,\overlap{Z(i,s,k)A[i]})$ to $J(i,s,k)$.
\end{enumerate}

Lemma~\ref{lem:recurrence} and the above discussion ensure the correctness of this algorithm.
Next we show how to do these operations efficiently.
We use the following two data structures.

\begin{definition}
	For any position $j$ in $B$ (i.e., $j \in [1,n]$) and any character $\alpha \in \Sigma$,
	\begin{equation*}
		\nextchar(j,\alpha) = \min\{q \mid B[q] = \alpha, q \geq j \}.
	\end{equation*}
\end{definition}

\begin{definition}
	For any position $t$ in $P$ (i.e., $t \in [0,r-1]$) and any character $\alpha \in \Sigma$,
	\begin{equation*}
		\newoverlap(t,\alpha) = \overlap{P[1..t]\alpha}.
	\end{equation*}
\end{definition}

At the second operation, we need to compute $J(i,s,k)$.
$J(i,s,k)$ is the index of the leftmost occurrence of $A[i]$ in $B[d(i-1,s-1,k)+1..n]$.
We can compute the occurrence by using $\nextchar$, 
namely, $J(i,s,k) = \nextchar(d(i-1,s-1,k)+1,A[i])$.

Moreover, we need to compute $\overlap{Z(i,s,k)A[i]}$.
We know that $\overlap{Z(i,s,k)} = k$, namely, $Z(i,s,k)$ has $P[1..k]$ as a suffix.
By the definition of $\overlap{\cdot}$, 
$\overlap{S}+1 \geq \overlap{S\alpha}$ holds for any string $S$ and $\alpha \in \Sigma$.
This implies that $\overlap{Z(i,s,k)A[i]} = \overlap{P[1..t]A[i]}$.
Thus, we can compute $\overlap{Z(i,s,k)A[i]}$ by using $\newoverlap(\cdot)$, 
namely, $\overlap{Z(i,s,k)A[i]} = \overlap{P[1..t]A[i]} = \newoverlap(t,A[i])$.

We can easily compute $\nextchar$ in linear time and space 
(we give a pseudo-code in Algorithm~\ref{alg:nextchar}).
$\newoverlap$ was introduced in~\cite{STRECLCS_Wang_2013} (as table $\lambda$).
They also showed that this table can be computed in linear time and space 
(we give a pseudo-code in Algorithm~\ref{alg:newoverlap}).

\begin{algorithm2e}[th]
  \caption{Construction for $\nextchar$}
  \label{alg:nextchar}

	\KwIn{String $B$ of length $n$, Alphabet $\Sigma$}
	\KwOut{$\nextchar$}  

	\lForEach{character $\alpha \in \Sigma$}{$\nextchar(n,\alpha)=n+1$}\;
	\For{$j=n-1$ \KwTo $0$}{
	 	\ForEach{$\alpha \in \Sigma$}{
			\lIf{$\alpha=B[j+1]$}{$\nextchar(j,\alpha)=j+1$}\;
			\lElse{$\nextchar(j,\alpha)=\nextchar(j+1,\alpha)$}\;
	 	}
	}
  
  \KwRet{$\nextchar$}
\end{algorithm2e}

\begin{algorithm2e}[th]
  \caption{Construction for $\newoverlap$}
  \label{alg:newoverlap}

  \KwIn{String $P$ of length $r$, Alphabet $\Sigma$}
  \KwOut{$\newoverlap$}

  $\kmp(0) \leftarrow -1$\;
  $\kmp(1) \leftarrow 0$\;
  $k \leftarrow 0$\;
  \For{$i=2$ \KwTo $r$}{
		\lWhile{$k \geq 0$ and $P[k+1] \neq P[i]$}{
			$k \leftarrow \kmp(k)$\;
		}
		$k \leftarrow k+1$\;
		$\kmp(i) \leftarrow k$\;
  }
	$\newoverlap(0,P[1]) \leftarrow 1$\;
  \ForEach{$\alpha \in \Sigma - \{P[1]\}$}{
		$\newoverlap(0,\alpha) \leftarrow 0$\;
  }
  \For{$k=1$ \KwTo $r-1$}{
    \lForEach{$\alpha \in \Sigma$}{
      \lIf{$\alpha = P[k+1]$}{
      	$\newoverlap(k,\alpha) \leftarrow k+1$\;
      }
      \lElse{
				$\newoverlap(k,\alpha) \leftarrow \newoverlap({\rm kmp}(k),\alpha)$\;      	
      }
    }
  }

  \KwRet ${\rm next}_\sigma$
\end{algorithm2e}

We have finished describing how to compute $d$.
This algorithm computes $O(m^2r)$ values (i.e., the size of the table $d$).
We can see that every operation can be done in constant time.
Thus, this algorithm takes $O(n|\Sigma|+m^2r)$ time and space.
This complexity is similar to Wang et al.s' result (algorithm described in Section~\ref{subsec:Wang}).
We can modify our algorithm to compute $d$ more efficiently by using the following two observations.

\begin{observation}\label{ob:right-upper}
	Assume that we have already computed table $d$ until the $i$-th diagonal line 
	(i.e., the diagonal line which has $d(i,0,\cdot)$).
	Let $s'$ be the lowest row which has a value smaller than $n+1$.
	Then, we do not need to compute the last $s'+1$ diagonal lines 
	since these diagonal lines do not make better candidates for \streclcs.
\end{observation}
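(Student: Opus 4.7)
The plan is to prove the observation by a purely combinatorial argument about which cells of the table~$d$ lie on each diagonal line. Concretely, I would show that a diagonal indexed $j$ can populate cells only in rows $0,1,\ldots,m-j$, and then note that for the last $s'+1$ diagonals this upper bound is at most $s'$, so these diagonals cannot lift the deepest finite-valued row above the currently known value $s'$.

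First I would pin down the geometry of the computation. By the indexing used in Section~\ref{sec:algorithm} (cf.\ Fig.~\ref{fig:comp_order}), the $j$-th diagonal line is exactly the set of cells
\begin{equation*}
  \{(j+\ell,\ell,k)\mid 0\le \ell\le m-j,\ 0\le k<r\}.
\end{equation*}
Because the algorithm sets $d(i',s,k)=n+1$ by default whenever $s>i'$, no cell with row index greater than $m-j$ is ever touched on diagonal~$j$. In particular, the maximum row that diagonal~$j$ can affect is $m-j$.

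Next I would apply this bound to the diagonals we want to skip. The last $s'+1$ diagonals are those indexed $j\in\{m-s',\,m-s'+1,\,\ldots,\,m\}$, so for each such $j$ the inequality $m-j\le s'$ holds. Hence any cell $d(i'',s,k)$ computed on one of these diagonals satisfies $s\le s'$. Now recall that, by the observation preceding Lemma~\ref{lem:recurrence}, the length of an STR-EC-LCS equals
\begin{equation*}
  \tilde s=\max\{s\mid d(i'',s,k)<n+1\text{ for some }i''\text{ and }k\}.
\end{equation*}
By assumption we already have $\tilde s\ge s'$ from the cells computed through diagonal~$i$. A cell written on one of the skipped diagonals has row at most $s'$, so it can never witness a value $s>s'$ in the above maximum. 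Therefore, whether or not we fill in those diagonals, the final value of $\tilde s$ is the same, and the skipped diagonals produce no candidate STR-EC-LCS strictly longer than what is already attainable.

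There is no real obstacle here; the only point that needs care is the boundary case $j=m-s'$, where one must check that the maximum row of this diagonal is exactly $s'$ (so it indeed cannot push $\tilde s$ strictly above $s'$) and that the range $m-s',\ldots,m$ really contains $s'+1$ distinct diagonal indices, matching the number claimed in the statement. Beyond this bookkeeping, the argument is a one-line monotonicity: fewer diagonals touched $\Rightarrow$ no new high-row cells $\Rightarrow$ $\tilde s$ is unchanged, which is exactly what \emph{``do not make better candidates''} means.
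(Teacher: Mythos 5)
Your argument is correct and is essentially the same as the paper's (unstated, one-line) justification: each of the last $s'+1$ diagonals can only touch rows $0,\ldots,m-j\le s'$, the answer is the deepest row containing a value below $n+1$, and row $s'$ is already witnessed, so those diagonals cannot raise the answer; moreover, since dependencies in Lemma~\ref{lem:recurrence} only point to the same or an earlier diagonal, skipping the last diagonals cannot corrupt any other cell. One cosmetic remark: diagonal $j$ ends at row $m-j$ because the column index cannot exceed $m$ (the diagonal runs off the right edge of the table), not because of the $s>i$ exclusion, but this does not affect your argument.
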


\begin{observation}\label{ob:stop}
	If $d(i,s,k) = n+1$ for all $k$, 
	then $d(i+1,s+1,k) = \ldots = d(i+(m-i),s+(m-i),k) = n+1$ holds for any $k$.
\end{observation}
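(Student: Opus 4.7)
The plan is to argue by contradiction. Assume there exist $j$ with $1 \le j \le m-i$ and some $k$ such that $d(i+j,s+j,k) \le n$. Unfolding the definition of $d$, this yields a string $Z$ of length $s+j$ that is simultaneously a subsequence of $A[1..i+j]$, a subsequence of $B$, and does not contain $P$ as a substring. From $Z$ I will extract a length-$s$ common subsequence of $A[1..i]$ and $B$ that avoids $P$, contradicting the assumption that $d(i,s,k')=n+1$ for every $k'$.

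Let $p_1 < p_2 < \cdots < p_{s+j}$ be the positions in $A$ at which the characters of $Z$ appear, so $Z[t]=A[p_t]$ and $p_{s+j}\le i+j$. Since the interval $[i+1,i+j]$ contains only $j$ integers, at least $s$ of the $p_t$'s lie in $[1,i]$, and because the sequence is strictly increasing these must be $p_1,\ldots,p_a$ for some $a \ge s$; in particular $p_s \le i$. Setting $Z' := Z[1..s]$, I then check three properties: (i) $Z'$ is a subsequence of $A[1..i]$, because its characters occur in $A$ at positions $p_1 < \cdots < p_s \le i$; (ii) $Z'$ is a subsequence of $B$, since it is a prefix of the $B$-subsequence $Z$; (iii) $Z'$ avoids $P$ as a substring, because every substring of a prefix of $Z$ is a substring of $Z$, and $Z$ itself avoids $P$. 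Taking $k':=\overlap{Z'}$, one obtains $d(i,s,k') \le n$, the desired contradiction.

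The only place where care is needed is property (iii): ``does not contain $P$ as a substring'' is not preserved under arbitrary deletion of characters from a string, since removing an interior character could splice together a new occurrence of $P$. This is precisely why the reduction truncates a \emph{suffix} of $Z$ to form $Z'$; substrings of a prefix of $Z$ are automatically substrings of $Z$, so the property transfers for free. Apart from this small subtlety, the argument reduces to a direct pigeonhole count on the $A$-positions $p_1,\ldots,p_{s+j}$, and it handles all values $j = 1, \ldots, m-i$ uniformly without induction. Note also that if $s > i$, the statement is vacuous since no $Z$ of length $s+j$ can be a subsequence of $A[1..i+j]$, so the nontrivial case $s \le i$ is the only one that needs the argument above.
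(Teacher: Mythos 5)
Your proof is correct. The paper states this observation without any proof, and your argument---pigeonholing the $A$-positions of $Z$ to show $p_s\le i$, then truncating a \emph{suffix} so that $P$-freeness and the subsequence-of-$B$ property transfer to the prefix $Z'=Z[1..s]$---is exactly the intended justification, including the key subtlety that arbitrary deletions would not preserve the no-occurrence-of-$P$ property.
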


Thanks to the above observations, 
the number of values which we need to compute is $O((L+1)(m-L+1)r)$ 
where $L$ is the length of STR-EC-LCS (see also Fig.~\ref{fig:complexity}).

\begin{figure}[t]
    \centerline{\includegraphics[width=0.4\linewidth]{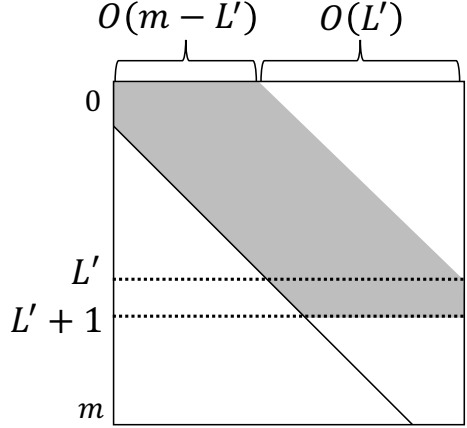}}
    \caption{
    	This is a table for some $k$.
    	Due to Observations~\ref{ob:right-upper} and~\ref{ob:stop}, 
    	we do not need to compute values in white part 
    	(there might exist positions which do not need their values).
    	The maximum number of values which we need to compute 
    	(namely, the total area of the $r$ gray parts) is $O((L+1)(m-L+1)r)$.
    }\label{fig:complexity}
\end{figure}

Finally, we discuss how to store $d$.
We consider computing the $i$-th diagonal line (i.e., $d(i,0,k), \ldots, d(i+(m-i),m-i,k)$).
Suppose that $d(i,0,k), \ldots, d(i+t-1,t-1,k)$ have already been computed.
Then, we store these values by using an array of size $2^{\lceil \log t \rceil}$.
If the array filled with values for the line 
(i.e., $d(i+2^{\lceil \log t \rceil}-1,2^{\lceil \log t \rceil}-1,k) < n+1$ for some $k$),
we make new array of size $2^{\lceil \log t \rceil + 1}$ 
for values $d(i,0,k), \ldots, d(i+2^{\lceil \log t \rceil + 1}-1,2^{\lceil \log t \rceil + 1}-1,k)$
on the line.
By Observation~\ref{ob:stop}, we will compute at most $L+2$ values for each line, 
the total length of arrays for each line is $O(L)$, where $L$ is the length of an STR-EC-LCS.
Therefore, we can compute the length of an STR-EC-LCS in $O(n|\Sigma|+(L+1)(m-L+1)r)$ time and space.

\noindent \textbf{Computing an STR-EC-LCS.}
If we want to compute an STR-EC-LCS, we store a pair $(s',k')$ for every $d(i,s,k)$.
The pair $(s',k')$ represents that $d(i,s,k)$ was given by $d(i-1,s',k')$.
By using these information, we can compute an STR-EC-LCS from right to left.
We show an example in Fig.~\ref{fig:backtrack}.

\begin{figure}[t]
    \centerline{\includegraphics[width=0.8\linewidth]{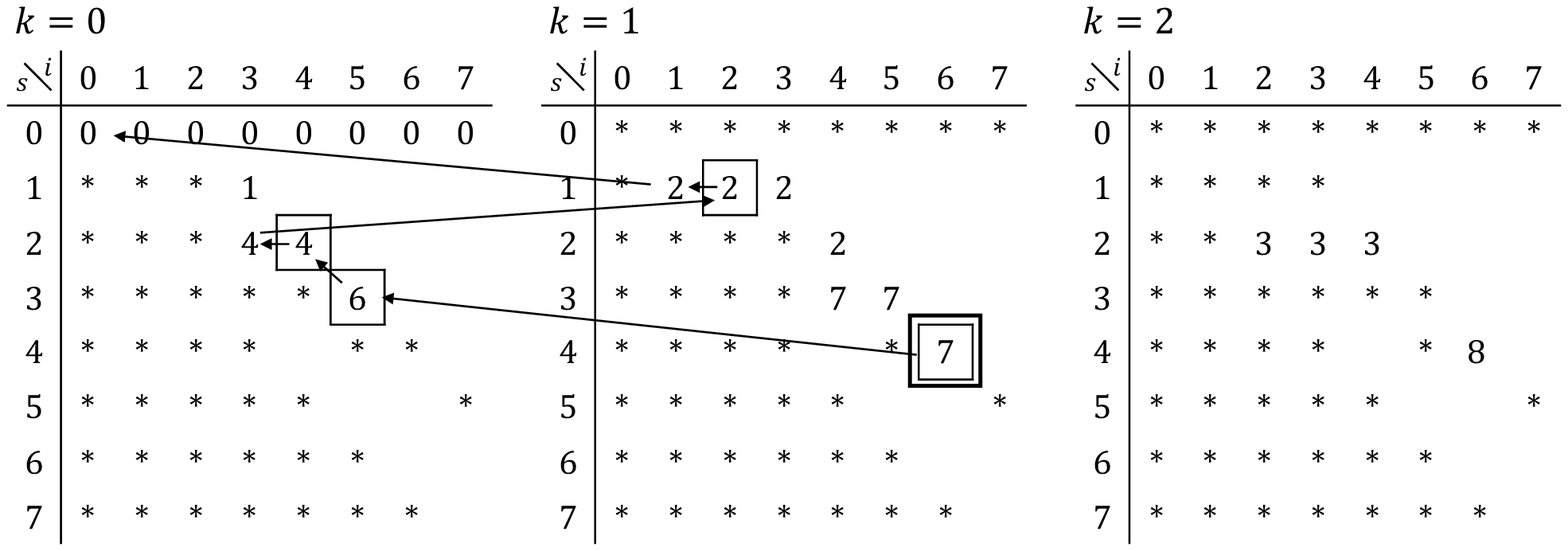}}
    \caption{
    	In this figure, an arrow represents additional information for backtracking.
    	For instance, $d(6,4,1) = 7$ was given by $d(5,3,0) = 6$ while computing $d$.
    	We can get an STR-EC-LCS $\mathtt{abca}$ of $A = \mathtt{aabacab}, B = \mathtt{baabbcaa}$, 
    	and $P = \mathtt{aab}$.
    }\label{fig:backtrack}
\end{figure}

Since we can store $(s',k')$ in constant time and space for each $d(i,s,k)$, 
and compute an STR-EC-LCS in $O(m)$ time, 
we can get the following main result.

\begin{theorem}
	For given strings $A, B$ and $P$,
	we can compute an STR-EC-LCS in $O(n|\Sigma| + (L+1)(m-L+1)r)$ time and space
  where $m, n, r$ and $L$ are the length of $A, B, P$ and the STR-EC-LCS, respectively.
\end{theorem}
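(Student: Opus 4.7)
The plan is to assemble the result by combining the preprocessing, the recurrence, and the pruning observations already established in the section, in that order.

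First, I would handle the preprocessing. Algorithm~\ref{alg:nextchar} builds the table $\nextchar$ in $O(n|\Sigma|)$ time and space, and Algorithm~\ref{alg:newoverlap} builds $\newoverlap$ in $O(r|\Sigma|) \subseteq O(n|\Sigma|)$ time and space (since we assumed $r \leq n$). These tables permit, for any triple $(i,s,k)$ with $d(i-1,s-1,k) < n+1$, the $O(1)$ computation of the two quantities appearing in Lemma~\ref{lem:recurrence}: namely, $J(i,s,k) = \nextchar(d(i-1,s-1,k)+1, A[i])$ and $\overlap{Z(i,s,k)A[i]} = \newoverlap(k, A[i])$. Thus each entry of $d$ can be processed in constant time.

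Next, I would use the diagonal-by-diagonal order of computation described at the start of Section~\ref{sec:algorithm}: in step $(i,s)$, first copy $d(i-1,s,k)$ into $d(i,s,k)$ as the initial value for every $k$, then for each $k$ update $d(i,s,\newoverlap(k,A[i]))$ downwards by $J(i,s,k)$ whenever this yields a smaller value. Correctness follows directly from Lemma~\ref{lem:recurrence}, and the length of an STR-EC-LCS is read off as the largest row index $\tilde{s}$ for which some entry is less than $n+1$, as recorded by the observation preceding Lemma~\ref{lem:recurrence}.

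The main task — and the step requiring the most care — is to argue that the total number of entries actually computed is $O((L+1)(m-L+1)r)$, matching Fig.~\ref{fig:complexity}. I would combine Observation~\ref{ob:right-upper} (which tells us to skip the last $L+1$ diagonals, since no diagonal starting beyond column $m-L$ can carry a value below $n+1$ in row $\leq L$) with Observation~\ref{ob:stop} (which lets us halt processing of a diagonal once a full row turns to $n+1$, so no diagonal contributes more than $L+2$ rows). For each fixed $k \in \{0,\dots,r-1\}$, this restricts the computed region to at most $(L+1)(m-L+1)$ cells, and multiplying by $r$ yields the stated bound on work. For space within a single diagonal, the doubling-array scheme described at the end of Section~\ref{sec:algorithm} keeps storage per diagonal linear in the number of rows actually filled, giving $O(L)$ per diagonal and hence $O((L+1)(m-L+1)r)$ overall.

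Finally, to recover the STR-EC-LCS itself rather than just its length, I would store with each updated cell the backpointer $(s',k')$ indicating which predecessor entry $d(i-1,s',k')$ triggered the update; this adds only constant time and space per computed cell. Starting from a cell $(m',\tilde{s},\tilde{k})$ with $d(m',\tilde{s},\tilde{k}) < n+1$, walking the pointers reconstructs a witness of length $\tilde{s} = L$ in $O(m)$ time, so the overall bound $O(n|\Sigma| + (L+1)(m-L+1)r)$ is preserved. The combined statement is then exactly the theorem.
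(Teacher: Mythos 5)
Your proposal is correct and follows essentially the same route as the paper: preprocessing $\nextchar$ and $\newoverlap$ in $O(n|\Sigma|)$ time, diagonal-order evaluation of the recurrence of Lemma~\ref{lem:recurrence} with $O(1)$ work per cell, Observations~\ref{ob:right-upper} and~\ref{ob:stop} to bound the computed cells by $O((L+1)(m-L+1)r)$, doubling arrays for space, and backpointers for $O(m)$-time reconstruction. (One small gloss: Observation~\ref{ob:right-upper} discards the last $s'+1$ diagonals because they cannot reach any row deeper than $s'$ and hence cannot improve the answer, not because they carry no values below $n+1$; your conclusion is nevertheless the right one.)
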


\section*{Acknowledgments}
This work was supported by JSPS KAKENHI Grant Numbers JP18K18002 (YN), JP17H01697 (SI), JP16H02783 (HB), JP18H04098 (MT), and by JST PRESTO Grant Number JPMJPR1922 (SI).

\bibliographystyle{abbrv}
\bibliography{ref}

\end{document}